\definecolor{light-gray}{gray}{0.9}
	\newtheorem{lemma}{Lemma}%
	\newtheorem{theorem}{Theorem}%
	\newtheorem{corollary}{Corollary}%
			\newtheorem{fact}{Fact}
		\newcommand{\price}{p\xspace}
	\newcommand\eat[1]{}
	\newlength{\wordlength}
	\newcommand{\midd}{\mathbin{:}}
	\newcommand{\eqclass}[2][]{\ifthenelse{\equal{#1}{}}{[#2]}{[#2]_{\sim_{#1}}}}
	\newcommand{\Pref}[1][]{
		\ifthenelse{\equal{#1}{}}{\mathrel R}{\mathop{R_{#1}}}
	}                                          
	\newcommand{\sPref}[1][]{                  
		\ifthenelse{\equal{#1}{}}{\mathrel P}{\mathop{P_{#1}}}
	}                                          
	\newcommand{\Indiff}[1][]{                 
		\ifthenelse{\equal{#1}{}}{\mathrel I}{\mathop{I_{#1}}}
	}
	\newcommand{\prefset}[1][]{\ifthenelse{\equal{#1}{}}{\mathcal{R}}{\mathcal{R}_{#1}}}
\newcommand{\nbh}[1][]{
	\ifthenelse{\equal{#1}{}}{\nu}{\nu(#1)}
}
\newcommand{\cstr}[1][]{
	\ifthenelse{\equal{#1}{}}{\mathscr S}{\cstr(#1)}
}
\newcommand{\choice}[1][]{
	\ifthenelse{\equal{#1}{}}{\mathit{C}}{\choice(#1)}
}
\newcommand{\pbDef}[3]{%
\noindent
\begin{center}
\begin{boxedminipage}{0.98 \columnwidth}
#1\\[5pt]
\begin{tabular}{l p{0.75 \columnwidth}}
Input: & #2\\
Question: & #3
\end{tabular}
\end{boxedminipage}
\end{center}
}
\tikzset{
  treenode/.style = {align=center, inner sep=0pt, text centered,
    font=\sffamily},
  arn_n/.style = {treenode, circle, white, font=\sffamily\bfseries, draw=black,
    fill=black, text width=1.5em},
  arn_r/.style = {treenode, circle, red, draw=red, 
    text width=1.5em, very thick},
  arn_x/.style = {treenode, rectangle, draw=black,
    minimum width=0.5em, minimum height=0.5em}
}
\begin{document}

\title{Competitive Equilibrium with Equal Incomes\\ for Allocation of Indivisible Objects}
\author{Haris Aziz}
	\address{NICTA and UNSW, Sydney, Australia}



%

\begin{abstract}
We settle the complexity of computing a discrete CEEI (Competitive Equilibrium with Equal Incomes)  assignment by showing it is strongly NP-hard. We then highlight a fairness notion (CEEI-FRAC) that is even stronger than CEEI for discrete assignments, is always Pareto optimal, and can be verified in polynomial time. We also show that computing a CEEI-FRAC discrete assignment is strongly NP-hard in general but polynomial-time computable if the utilities are zero or one.

\end{abstract}

	\begin{keyword}
	 	Fair division
		 \sep Computational Complexity 
		 	\sep Competitive Equilibrium with Equal Incomes\\
		
		\emph{JEL}: C62, C63, and C78
	\end{keyword}

\maketitle

\section{Introduction}
CEEI (Competitive Equilibrium with Equal Incomes) is one of the most fundamental solution concepts in resource allocation~\citep{Budi11a,Moul03a}. The concept is based on the idea of a market based equilibrium: each agent has cardinal utilities over objects and each agent is considered to have equal budget of unit one to spend. Based on the utilities of the agents, an assignment of objects is in CEEI, if for some price vector, the supply meets demand and the agents use their budget to maximize utility. CEEI is an attractive solution concept because it implies envy-freeness.

In AAMAS 2014, Bouveret and Lema{\^\i}tre~\citep{BoLe14a,BoLe15a} highlighted a scale of fairness criteria. Among the criteria, they positioned CEEI~\citep{Budi11a,OSB10a} as the most stringent which implies the other fairness criteria. 
Recently, \citet{APR14a} showed that finding a discrete A-CEEI assignment is PPAD-complete and it is NP-hard to distinguish between an instance where an exact CEEI assignment exists, and one in which there is no A-CEEI tighter than guaranteed in \citet{Budi11a}.

\citet{BoLe14a,BoLe15a} posed the following open problem: what is the computational complexity of checking whether these exists a discrete CEEI assignment? We settle this problem by showing it is strongly NP-hard.
In addition to the CEEI notion previously studied~\citep{BoLe14a,BoLe15a,Budi11a,OSB10a} which we will from now on refer to as CEEI-DISC, we highlight a notion called CEEI-FRAC that is stronger than CEEI-DISC. The desirable aspect of CEEI-FRAC is that it implies the CEEI-DISC  and unlike CEEI-DISC, it implies Pareto optimality. Whereas the complexity of checking whether an assignment is CEEI-DISC has been open, we show that CEEI-FRAC can be verified in polynomial time. We also show that computing a CEEI-FRAC discrete assignment is strongly NP-hard in general but polynomial-time if the utilities are zero or one.
Independently from our work~\citep{Aziz15a}, \citet{SHM15a} recently presented complementing complexity and characterization results for CEEI-DISC and showed that checking whether a given assignment is CEEI-DISC is coNP-complete. Hence, there is a marked contrast between the complexity of testing CEEI-DISC and CEEI-FRAC.

\section{Preliminaries}
%

An assignment problem is a triple $(N,O,u)$ such that $N=\{1,\ldots, n\}$ is the set of agents, $O=\{o_1,\ldots, o_m\}$ is the set of objects, and $u=(u_1,\ldots, u_n)$ is the utility profile which specifies for each agent $i\in N$ utility function $u_{ij}$ where $u_{ij}$ denotes the utility of agent $i$ for object $o_j$. We assume that for each $j\in \{1,\ldots, m\}$,  $u_{ij}>0$ for some $i\in N$ and for each $i\in N$,  $u_{ij}>0$ for some $j\in \{1,\ldots,m\}$.


A fractional assignment $x$ is a $(n\times m)$ matrix $[x_{ij}]$ such that $ x_{ij} \in [0,1]$ for all $i\in N$, and $o_j\in O$, and  $\sum_{i\in N}x_{ij}= 1$ for all $o_j\in O$.
The value $x_{ij}$ represents the fraction of object $o_j$ being allocated to  agent $i$. Each row $x_i=(x_{i1},\ldots, x_{im})$ represents the \emph{allocation} of agent $i$. 
The set of columns correspond to the objects $o_1,\ldots, o_m$.
A feasible fractional assignment is \emph{discrete} if $x_{ij}\in \{0,1\}$ for all $i\in N$ and $o_j\in O$. 
Let the set of all fractional allocations be $\mathcal{F}$ and the set of all discrete allocations be $\mathcal{D}$.
Each agent has additive utility so that the utility received by agent $i$ from assignment $x$ is $u_i(x_i)=\sum_{o_j\in O}x_{ij}u_{ij}.$ The Nash welfare of an assignment $x$ is $\prod_{i\in N}u_i(x_i)$.


An assignment $x$ satisfies CEEI-DISC if there exists a price vector $\price=(\price_1,\ldots,\price_m)$ that specifies the price $\price_j$ of each object $o_j$ such that the maximal share that each $i\in N$ can get with budget $1$ is $x_i\in \{x'\in \mathcal{D}\midd x_i'\in \text{argmax}\{u_i(x_i'):\sum_{o_j\in O}x_{ij}'\cdot(\price_j)\leq 1\}\}$. Note that the assignment $x$ itself could be fractional. 
When we only consider discrete assignments, CEEI-DISC coincides with the CEEI notion for discrete allocations studied in \citep{BoLe14a,BoLe15a,Budi11a,OSB10a}.
An assignment $x$ satisfies CEEI-FRAC if there exists a price vector $\price=(\price_1,\ldots,\price_m)$ that specifies the price $\price_j$ of object $o_j$ such that the maximal share that each $i\in N$ can get with
 budget $1$ is $x_i\in \{x'\in \mathcal{F}\midd x_i'\in \text{argmax}\{u_i(x_i'):\sum_{o_j\in O}x_{ij}'\cdot(\price_j)\leq 1\}\}$.
CEEI-FRAC coincides with the market equilibrium notion studied in \citep{Vazi07a}.

\section{CEEI-DISC}

%

We settle the complexity of CEEI-DISC and answer the open problem in \citep{BoLe14a,BoLe15a}.

\begin{theorem}\label{th:ceei-disc}
	Computing a discrete CEEI-DISC assignment is weakly NP-hard even for two agents. 
	\end{theorem}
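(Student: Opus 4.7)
The plan is to reduce from \textsc{Partition}, which is weakly NP-complete and thus suits a ``weakly NP-hard'' claim. Given a \textsc{Partition} instance with positive integers $a_1,\ldots,a_m$ summing to $2S$, I construct an assignment problem $(N,O,u)$ with $N=\{1,2\}$, $O=\{o_1,\ldots,o_m\}$, and identical utilities $u_{1j}=u_{2j}=a_j$ for every $j$. The positivity of the $a_j$ ensures the paper's running assumption on utilities. I claim that a CEEI-DISC assignment exists for this instance if and only if the \textsc{Partition} instance is a yes-instance.

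For the \emph{forward} direction, suppose $B_1,B_2$ is a balanced partition of $O$, so $\sum_{j\in B_i}a_j=S$ for $i\in\{1,2\}$. I set prices $\price_j=a_j/S$ and allocate the objects in $B_i$ to agent $i$. Each agent then spends exactly its unit budget ($\sum_{j\in B_i}a_j/S=1$) and obtains utility $S$. For any affordable discrete bundle $y\in\{0,1\}^m$ with $\sum_j y_j\,\price_j\le 1$ we have $\sum_{j:y_j=1}a_j\le S$, so no agent can do better. Hence $(x,\price)$ is a CEEI-DISC assignment.

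For the \emph{backward} direction, suppose $(x,\price)$ is a CEEI-DISC assignment. Because agents $1$ and $2$ share identical utilities, identical budgets, and face identical prices, they face \emph{identical} discrete utility-maximisation problems over $\{y\in\{0,1\}^m\midd \sum_j y_j\,\price_j\le 1\}$. Hence $u_1(x_1)=u_2(x_2)$, both equal to the common maximum value $M$. Feasibility of a discrete assignment gives $x_{1j}+x_{2j}=1$ for every $j$, so $u_1(x_1)+u_2(x_2)=\sum_j a_j=2S$. Thus $M=S$, and the sets $B_i=\{j \midd x_{ij}=1\}$ constitute a balanced partition.

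The main obstacle I anticipate is the backward direction, specifically the step asserting that two agents with identical utility functions must receive bundles of equal utility in any CEEI-DISC. The subtle point is that CEEI-DISC requires each agent's bundle to lie in the argmax of the hypothetical demand set at the chosen prices, \emph{not} merely that the agent cannot envy the other agent's realised bundle; once this is pinned down, the equality $u_1(x_1)=u_2(x_2)$ follows immediately because both agents solve the same program. The rest of the argument is arithmetic, and the reduction is clearly polynomial in the binary encoding length of the $a_j$, giving weak NP-hardness for two agents.
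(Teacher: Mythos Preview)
Your proposal is correct and follows essentially the same reduction from \textsc{Partition} as the paper: two agents with identical utilities equal to the integer weights, so that a CEEI-DISC discrete assignment exists iff the instance admits a balanced partition. The only difference is cosmetic: the paper invokes an existing characterization (Proposition~12 of \citep{BoLe14a}) stating that with identical utilities a complete discrete assignment is CEEI-DISC iff all agents receive equal utility, whereas you prove both directions directly, supplying the explicit prices $\price_j=a_j/S$ for the forward direction and the equal-argmax argument for the backward direction.
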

	\begin{proof}
		We reduce from the following weakly NP-complete problem~\citep{GaJo79a}:
	 \pbDef{\textsc{Partition}}
			 {A multiset $S$ of $m$ positive integers with total sum $2W$. }
			 {Does there exists a partition of the elements in $S$ such that the sum of integers in each set in the partition is $W$?}
			 
			 We construct an assignment instance with two agents and $m$ objects corresponding to the $m$ integers. Agents have identical utilities for the $m$ objects and the utility of each object is equal to the integer weight of the corresponding integer. 
Note that there exists a discrete assignment in which each agent gets the same utility ($W$) iff there is an partition of the integers corresponding to the utility weights so that each set of integers has total weight $W$.	Also observe that when agents have identical utility functions, a (complete) discrete assignment $x$ satisfies CEEI-DISC if and only if the utilities of all the agents are identical (Proposition 12, \citep{BoLe14a}). Hence we get that \textsc{Partition} has a yes instance iff there exists a CEEI-DISC discrete assignment. 
Since \textsc{Partition} is weakly NP-complete, it follows that computing a discrete CEEI-DISC assignment is weakly NP-hard.
		\end{proof}

Next we show that computing a discrete CEEI-DISC assignment is strongly NP-hard.

\begin{theorem}\label{th:ceei-disc}
	Computing a discrete CEEI-DISC assignment is strongly NP-hard.
	\end{theorem}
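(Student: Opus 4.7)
The plan is to mimic the reduction used in the previous (weak) hardness proof, but to start instead from a strongly NP-complete numerical partition problem. The natural candidate is \textsc{3-Partition}: given a multiset $S=\{a_1,\ldots,a_{3m}\}$ of positive integers, each lying strictly between $B/4$ and $B/2$ and summing to $mB$, decide whether $S$ can be partitioned into $m$ triples each of sum exactly $B$. This problem is known to be strongly NP-complete (Garey and Johnson), so a polynomial-time many-one reduction from it yields the desired strong NP-hardness.

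Given such a \textsc{3-Partition} instance, I would construct an assignment problem $(N,O,u)$ with $n=m$ agents, $|O|=3m$ objects $o_1,\ldots,o_{3m}$, and an identical utility profile defined by $u_{ij}=a_j$ for every agent $i\in N$ and object $o_j\in O$. The construction is clearly polynomial, and crucially it is a \emph{polynomial-size} reduction in the usual strong sense: numbers are not blown up in magnitude, so unary-bounded input integers remain unary-bounded, which is what strong NP-hardness requires.

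For correctness I would invoke the same structural fact used in the previous theorem: when all agents have identical utilities, a (complete) discrete assignment is CEEI-DISC if and only if every agent receives the same total utility (Proposition 12 of~\citep{BoLe14a}). Since the total utility of all objects is $mB$ and there are $m$ agents, ``all utilities equal'' forces each agent's bundle to sum to exactly $B$. Thus a CEEI-DISC discrete assignment exists iff the $3m$ objects can be partitioned into $m$ bundles each of utility $B$. Finally, I would argue both directions of the equivalence with \textsc{3-Partition}: the forward direction is immediate, and the reverse direction uses the standard \textsc{3-Partition} range restriction $B/4<a_j<B/2$, which forces any subset of $S$ summing to $B$ to contain exactly three elements; hence any partition into bundles of sum $B$ is automatically a 3-partition.

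The main potential obstacle is ensuring that the invoked characterization from~\citep{BoLe14a} applies to our multi-agent setting exactly as used (complete assignments, identical utilities, equal utilities implies CEEI-DISC via a uniform price vector), and in particular that we can exhibit supporting prices—e.g., setting every price equal to $1/k$ where $k$ is the number of objects each agent can afford in some equal-utility split—so that each agent's chosen bundle is indeed utility-maximizing within budget $1$. Once that is verified, the reduction is essentially the same as in \thmref{th:ceei-disc} above, but strengthened by replacing \textsc{Partition} with \textsc{3-Partition}.
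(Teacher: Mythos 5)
Your reduction is exactly the one the paper uses: reduce from \textsc{3-Partition} with identical utilities given by the element weights, and invoke Proposition 12 of \citep{BoLe14a} to equate CEEI-DISC existence with an equal-utility split. Your added remarks about the $B/4<a_j<B/2$ range forcing triples and about exhibiting supporting prices are sound elaborations of the same argument, so the proposal is correct and essentially identical to the paper's proof.
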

	\begin{proof}
		We reduce from the following strongly NP-complete problem~\citep{GaJo79a}:
	 \pbDef{\textsc{3-Partition}}
			 {A finite set $E=\{e_1,\ldots, e_{3n}\}$ of $3n$ elements, a bound $W$ and integer weight $w(e_j)$ for each $e_j\in E$ such that 
$\frac{W}{4}<w(e_j)<\frac{W}{2}$ and $w(E)=\sum_{j=1}^{3n}w(e_j)=nW$.}
			 {Can $E$ can be partitioned into $n$ disjoint sets $E_1,\ldots, E_m$ and weight $w(E_i)=W$ for all $i\in [n]$?}
			 
			 We construct an assignment instance with $n$ agents and $3n$ objects corresponding to the $3n$ integers. Agents have identical utilities for the $3n$ objects and the utility of each object is equal to the integer weight of the corresponding integer. 
Note that there exists a discrete assignment in which each agent gets the same utility ($W$) iff there is an integer tri-partition of the integers corresponding to the utility weights so that each set of integers has total weight $W$.	Also observe that when agents have identical utility functions, a (complete) discrete assignment $x$ satisfies CEEI-DISC if and only if the utilities of all the agents are identical (Proposition 12, \citep{BoLe14a}). Hence we get that \textsc{3-Partition} has a yes instance iff there exists a CEEI-DISC discrete assignment. 
Since \textsc{3-Partition} is strongly NP-complete, it follows that computing a discrete CEEI-DISC assignment is strongly NP-hard.
		\end{proof}

\section{CEEI-FRAC}

\citet{Vazi07a} points out that a market equilibrium allocation always exists for fractional assignments and it can be computed via the solution to the following linear program with a convex objective due to Eisenberg and Gale where each agent has budget $e_i$.
\begin{align}
\text{max} \quad 
\sum_{i=1}^{n} e_i\log {u_i} \notag &
\quad \text{s.t.} \quad
u_i = \sum_{j=1}^{m} u_{ij}x_{ij}\quad \forall i\in N \notag &\\
\sum_{i=1}^{n} x_{ij} \leq 1 &\quad \forall j\in \{1,\ldots, m\}\notag \quad 
\text{ and }x_{ij} \geq 0\quad \forall i, j. \notag \notag
\end{align}

Using duality theory, one can interpret the dual variable $p_i$ associated with the constraints $\sum_{i=1}^{n} x_{ij} \leq 1$ as the price of consuming a unit of object $o_j$~\citep{ShSo82a,Vazi07a}. 
Since in CEEI-FRAC, each agent has equal budget, we have the following facts~\citep{Vazi07a}:
%

\begin{fact}\label{fact:ceei-maxNash}
	A fractional assignment maximizing Nash welfare gives a CEEI-FRAC solution.
	\end{fact}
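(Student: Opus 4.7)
The plan is to reduce the statement to the Eisenberg--Gale convex program displayed immediately above. Since $\log$ is strictly monotone, maximizing the Nash welfare $\prod_{i \in N} u_i(x_i)$ over the polytope $\mathcal{F}$ of fractional assignments is equivalent to maximizing $\sum_{i \in N} \log u_i(x_i)$; setting $e_i = 1$ for every $i$ in the Eisenberg--Gale program yields exactly this objective and feasible region. Hence any Nash welfare maximizer $x^{\ast}$ is an optimal solution of the unit-budget Eisenberg--Gale program.

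I would first verify that the optimum is attained at a point where every agent's utility is strictly positive, so that $\log u_i$ is differentiable there. The uniform assignment $x_{ij} = 1/n$ already gives $u_i > 0$ for every $i$ by the standing assumption that each agent values at least one object positively, so the supremum is finite and realized in the relative interior of $\{u_i > 0\}$. Next I would introduce a dual multiplier $p_j \geq 0$ for each supply constraint $\sum_i x_{ij} \leq 1$ and write the KKT conditions. Stationarity with respect to $x_{ij}$ gives $u_{ij}/u_i(x^{\ast}_i) \leq p_j$, with equality whenever $x^{\ast}_{ij} > 0$. Multiplying the equality on the support by $x^{\ast}_{ij}$ and summing over $j$ yields $\sum_j p_j x^{\ast}_{ij} = 1$, so each agent exactly exhausts a unit budget at the dual prices.

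To close, for any alternative bundle $y$ with $\sum_j p_j y_j \leq 1$, the pointwise bound $u_{ij} \leq u_i(x^{\ast}_i)\, p_j$ yields
\begin{equation*}
u_i(y) \;=\; \sum_j u_{ij}\, y_j \;\leq\; u_i(x^{\ast}_i)\, \sum_j p_j\, y_j \;\leq\; u_i(x^{\ast}_i),
\end{equation*}
so $x^{\ast}_i$ lies in $\arg\max\{u_i(x'_i) : \sum_j p_j x'_{ij} \leq 1,\ x'\in\mathcal{F}\}$, which is precisely the CEEI-FRAC condition at the price vector $p$. The main obstacle is the KKT step --- specifically, combining stationarity, primal feasibility, and complementary slackness to get that every budget is tight and that the stationarity inequality rearranges into the bang-per-buck bound used in the final display; the reduction of Nash welfare to the Eisenberg--Gale objective and the attainment argument are straightforward.
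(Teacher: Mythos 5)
Your argument is correct and is exactly the route the paper intends: it takes Fact~\ref{fact:ceei-maxNash} from \citet{Vazi07a} via the Eisenberg--Gale program displayed just above it, whose KKT/duality analysis (dual variables as prices, tight unit budgets, bang-per-buck optimality on the support) is what you have written out. The only point worth a half-sentence of care is that the Eisenberg--Gale program uses $\sum_i x_{ij}\leq 1$ while $\mathcal{F}$ requires equality; since every object is valued by some agent, the two optima coincide and your KKT step applies to the Nash welfare maximizer over $\mathcal{F}$.
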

\begin{fact}\label{fact:same-utility}
	All CEEI-FRAC assignments give the same utility to the agents and prices to the objects.
	\end{fact}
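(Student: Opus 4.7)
The plan is to deduce both uniqueness claims from the Eisenberg--Gale convex program displayed above, using Fact~\ref{fact:ceei-maxNash} together with its converse (that every CEEI-FRAC allocation is an optimal solution of that program). Uniqueness of utilities will reduce to a strict-concavity argument, and then uniqueness of prices will follow from a short bang-per-buck calculation.

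First I would argue uniqueness of the utility profile. With equal budgets $e_i=1$, the Eisenberg--Gale objective becomes $\sum_{i=1}^{n}\log u_i$, a strictly concave function of $(u_1,\ldots,u_n)$. The set of attainable utility vectors $\{(u_1,\ldots,u_n)\midd u_i=\sum_{j} u_{ij}x_{ij}\text{ for some feasible fractional }x\}$ is convex, being the image of the convex feasibility polytope under a linear map, so the maximizing $(u_1,\ldots,u_n)$ is unique. Combined with Fact~\ref{fact:ceei-maxNash} and its converse, every CEEI-FRAC allocation realizes exactly this utility vector.

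Next I would derive uniqueness of prices. Let $x$ and $x'$ be two CEEI-FRAC allocations with price vectors $p$ and $p'$; by the previous step they yield the same utility profile $u=(u_1,\ldots,u_n)$. In a CEEI-FRAC equilibrium with unit budget, the maximum bang-per-buck ratio $\max_j u_{ij}/p_j$ for agent $i$ equals $u_i$, and $u_{ij}/p_j=u_i$ whenever $x_{ij}>0$, while $u_{ij}/p_j\le u_i$ for every $j$; the analogous inequalities hold for $p'$ and $x'$. Fix any object $o_j$; since $\sum_i x_{ij}=1$ there is some agent $i$ with $x_{ij}>0$, whence $p_j=u_{ij}/u_i$. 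Applying $u_{ij}/p'_j\le u_i$ for agent $i$ at prices $p'$ gives $p'_j\ge u_{ij}/u_i=p_j$. The symmetric argument yields $p_j\ge p'_j$, hence $p_j=p'_j$.

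The step I expect to be the main obstacle is the converse of Fact~\ref{fact:ceei-maxNash}, i.e., showing that \emph{every} CEEI-FRAC allocation, not just one produced by solving the convex program, is optimal for Eisenberg--Gale. This is obtained by writing the KKT conditions of the program and matching them term by term to the three ingredients of the CEEI-FRAC definition (budget exhaustion, bang-per-buck maximization, and market clearing). The argument is standard in the Fisher-market literature and is implicit in \citep{Vazi07a}; once it is in place the two steps above become short.
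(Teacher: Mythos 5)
Your proof is correct, but the comparison with the paper is lopsided: the paper offers no proof at all, importing Fact~\ref{fact:same-utility} from \citep{Vazi07a} alongside Fact~\ref{fact:ceei-maxNash} on the strength of the remark that CEEI-FRAC coincides with the linear Fisher-market equilibrium. What you have written is essentially the standard Eisenberg--Gale uniqueness argument that the citation points to, and both halves check out: strict concavity of $\sum_i \log u_i$ on the convex, compact set of attainable utility vectors gives uniqueness of the utility profile (the optimum has every $u_i>0$ because each agent values some object, so the logarithms are defined), and your bang-per-buck computation $p_j=u_{ij}/u_i$ for any agent $i$ with $x_{ij}>0$, combined with the inequality $p'_j\geq u_{ij}/u_i$ at the other equilibrium and symmetry, gives uniqueness of prices. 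You also correctly localize the load-bearing step as the converse of Fact~\ref{fact:ceei-maxNash} (every CEEI-FRAC allocation solves the Eisenberg--Gale program); you only sketch this via KKT, but that is precisely the content of the theory in \citep{Vazi07a}, so leaving it at a citation matches what the paper itself does. One caveat is worth making explicit: your budget-exhaustion step ($\max_j u_{ij}/p_j=u_i$) requires each agent's demand to be the \emph{unconstrained} Fisher demand. Under the paper's literal definition, where the demand correspondence ranges over rows of $\mathcal{F}$ and is therefore capped at one unit of each object, an agent need not spend her whole budget, and price uniqueness can genuinely fail (two agents and two objects, each agent valuing only ``her'' object, admit equilibria at any price vector with both prices in $(0,1]$). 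So your proof, like the Fact itself, is really a statement about the uncapped market-equilibrium reading that the paper declares it intends.
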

	%
		
	
	From Facts~\ref{fact:ceei-maxNash} and ~\ref{fact:same-utility}, we get that an assignment is a CEEI-FRAC assignment if and only if maximizes Nash welfare among all fractional assignments. 
	Note that even for discrete settings, CEEI-FRAC constitutes a well-defined and attractive fairness concept. 
A CEEI-FRAC discrete assignment is an assignment that yields the same utilities as the unique utilities and prices from the fractional CEEI assignments. 

 \citet{BoLe14a} wondered whether Pareto optimality and envy-freeness is also a sufficient condition for CEEI-DISC. The next argument shows that this is not the case for discrete CEEI-FRAC assignments.

	\begin{theorem}\label{th:ef-po}
		An envy-free and Pareto optimal assignment may not be CEEI-FRAC.
		\end{theorem}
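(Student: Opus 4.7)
The plan is to exhibit a small, explicit counterexample: an assignment instance together with a discrete allocation $x$ that is envy-free and Pareto optimal yet whose utility vector differs from the unique Nash-welfare maximizing utility vector. By \factref{fact:ceei-maxNash} and \factref{fact:same-utility} every CEEI-FRAC assignment delivers the same utilities as a Nash-welfare maximizer, so any such mismatch immediately proves that $x$ is not CEEI-FRAC.

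Concretely, I would try $N=\{1,2\}$, $O=\{o_1,o_2,o_3\}$ with $u_1=(3,2,1)$ and $u_2=(1,2,3)$, and the discrete assignment $x$ in which agent~$1$ receives $\{o_1\}$ and agent~$2$ receives $\{o_2,o_3\}$, giving the utility vector $(3,5)$. Envy-freeness is immediate since $u_1(\{o_2,o_3\})=3\le 3$ and $u_2(\{o_1\})=1\le 5$. For Pareto optimality (in the stronger fractional sense), I would argue via a short LP-style deduction: any fractional $x'$ with $u_1(x')\ge 3$ and $u_2(x')\ge 5$ must satisfy $3x'_{11}+2x'_{12}+x'_{13}\ge 3$ and $x'_{11}+2x'_{12}+3x'_{13}\le 1$, whose difference gives $x'_{11}-x'_{13}\ge 1$; combined with $x'_{11}\le 1$ and $x'_{13}\ge 0$ this forces $x'_{11}=1$, $x'_{13}=0$, and then the second inequality forces $x'_{12}=0$, so $x'=x$ and no strict Pareto improvement exists.

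It remains to identify the Nash-welfare maximizer over fractional assignments. Writing $u_1=3x_{11}+2x_{12}+x_{13}$ and $u_2=6-x_{11}-2x_{12}-3x_{13}$, the concavity of $\log u_1+\log u_2$ in $x$ together with the sign of the partials in $x_{11}$ and $x_{13}$ (positive and negative respectively, by the symmetry of the instance) pushes the optimum to $x_{11}=1$, $x_{13}=0$, reducing the problem to maximising $(3+2x_{12})(5-2x_{12})$ on $[0,1]$; the optimum at $x_{12}=\tfrac12$ gives utilities $(4,4)$ with Nash welfare $16>15=3\cdot 5$. Hence the unique CEEI-FRAC utility vector is $(4,4)\ne(3,5)$, so $x$ is not CEEI-FRAC.

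The main obstacle is the fractional Pareto-optimality check: one must rule out Pareto improvements over all fractional allocations and not merely over the eight discrete ones, which requires the short LP-style deduction above rather than an enumeration. Everything else reduces to elementary arithmetic together with the Nash-welfare characterisation of CEEI-FRAC provided by the two facts already cited.
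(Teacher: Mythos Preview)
Your proposal is correct and follows essentially the same approach as the paper: both exhibit a symmetric two-agent instance together with an asymmetric envy-free, Pareto optimal discrete allocation whose utility vector cannot coincide with the (symmetric) CEEI-FRAC utilities. Your example is slightly smaller (three objects instead of four) and you supply considerably more verification detail than the paper --- in particular the fractional Pareto optimality check and the explicit identification of the Nash-welfare maximiser --- whereas the paper simply states the example and asserts the properties.
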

		\begin{proof}
				An assignment that is EF and PO may not be CEEI-FRAC.
				Consider the following utilities of agents:
					\begin{table}[h!]
				\centering	
				\begin{tabular}{l|llll}  
				&$o_1$&$o_2$&$o_3$&$o_4$\\\midrule
				1&$95$&$5$&$2$&$1$\\
				2&$1$&$2$&$5$&$95$
				\end{tabular}
				\vspace{-1em}
				\label{table:fractionalassigmentRSD4}
				\end{table}		
				Then both complete discrete assignments in which agent $1$ gets $o_1$ and agent $2$ gets $o_4$ are envy-free and Pareto optimal. However the assignment in which $1$ gets only $o_1$ and $2$ gets the other objects is not a CEEI-FRAC assignment. 
				\end{proof}

				Next, we show that verifying a discrete CEEI-FRAC assignment is easy whereas checking whether there exists a discrete assignment which is CEEI-FRAC is strongly NP-complete.

				\begin{theorem}\label{th:check-ceei}
					It can be checked in polynomial-time whether a given discrete assignment $y$ is a CEEI-FRAC assignment.
					\end{theorem}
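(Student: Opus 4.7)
The strategy is to exploit Fact~\ref{fact:same-utility}: every CEEI-FRAC assignment induces the same utility vector $u^{*}=(u_{1}^{*},\ldots,u_{n}^{*})$ and the same price vector $p^{*}=(p_{1}^{*},\ldots,p_{m}^{*})$, both determined by the instance $(N,O,u)$. It therefore suffices to compute $u^{*}$ and $p^{*}$ once, and then test whether $y$ is consistent with them.

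First I would compute $u^{*}$ and $p^{*}$ in polynomial time. By Fact~\ref{fact:ceei-maxNash}, these are realised by any Nash-welfare-maximising fractional allocation, that is, by the equilibrium of the linear Fisher market with equal budgets. The Eisenberg--Gale convex program captures this, but its objective is transcendental, so I would instead invoke a known polynomial-time algorithm for linear Fisher market equilibrium (for instance that of Devanur--Papadimitriou--Saberi--Vazirani, of Jain, or of Orlin), which returns exact rational $p^{*}$ and $u^{*}$. I would then verify, for every $i\in N$, the two conditions $u_{i}(y_{i})=u_{i}^{*}$ and $\sum_{j}y_{ij}\,p_{j}^{*}\leq 1$; both reduce to simple arithmetic on rational numbers.

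For correctness: if $y$ is CEEI-FRAC, then Fact~\ref{fact:same-utility} forces $u_{i}(y_{i})=u_{i}^{*}$ for every $i$ and the certifying price vector to be $p^{*}$, while budget feasibility under that certificate gives the second condition. Conversely, when both conditions hold, at prices $p^{*}$ each bundle $y_{i}$ is affordable on a budget of $1$ and attains utility $u_{i}^{*}$, which by the definition of $u^{*}$ is the maximum utility any fractional bundle costing at most $1$ can achieve; hence $y_{i}$ is utility-maximising over $\mathcal{F}$ and $y$ is CEEI-FRAC with price vector $p^{*}$.

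The only real obstacle is the first step: a generic convex-programming oracle only yields an $\varepsilon$-approximation, whereas the subsequent equality test $u_i(y_i)=u_i^{*}$ is exact. One must therefore appeal to an exact polynomial-time algorithm for linear Fisher market equilibrium, of which several are known in the literature. Modulo that ingredient, the verification runs in polynomial time in the size of $(N,O,u)$ and $y$.
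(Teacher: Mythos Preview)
Your approach is essentially the paper's: compute the unique CEEI-FRAC utility vector $u^{*}$ (the paper simply cites the Eisenberg--Gale program, whereas you more carefully invoke an exact combinatorial Fisher-market algorithm) and then test whether $u_i(y_i)=u_i^{*}$ for every $i$. Your extra budget test $\sum_j y_{ij}p_j^{*}\le 1$ is harmless but redundant, since any feasible assignment attaining $u^{*}$ is optimal for the Eisenberg--Gale program and hence, by the KKT conditions, spends exactly its unit budget at prices $p^{*}$.
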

					\begin{proof}
						Compute in polynomial time the utilities achieved by the agents in a CEEI-FRAC fractional solution $x$ via the Eisenberg-Gale convex program. The discrete assignment $y$ is a CEEI-FRAC assignment iff
the utilities achieved by the agents in $y$ are the same as in $x$. 
						\end{proof}

						\begin{theorem}\label{th:exists-ceei}
							Checking whether there exists a discrete assignment which is CEEI-FRAC is strongly NP-complete.
							\end{theorem}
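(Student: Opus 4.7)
The plan is to prove membership in NP and then strong NP-hardness by a reduction from \textsc{3-Partition}, mirroring the construction used in the second \thmref{th:ceei-disc}.

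\textbf{Membership.} By \thmref{th:check-ceei}, given a discrete assignment one can verify in polynomial time whether it is CEEI-FRAC (by solving the Eisenberg--Gale convex program and comparing utilities). So certificates are polynomial-size and checkable in polynomial time.

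\textbf{Hardness.} I reduce from \textsc{3-Partition}, exactly as in the second \thmref{th:ceei-disc}: given elements $e_1,\ldots,e_{3n}$ with weights $w(e_j)$, totalling $nW$, create $n$ agents and $3n$ objects, and let every agent $i$ have $u_{ij}=w(e_j)$ for every object $o_j$. The key claim is that this instance admits a discrete CEEI-FRAC assignment if and only if the \textsc{3-Partition} instance is a yes-instance.

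To justify the claim, I first observe that, because all agents share the same utility function, the fractional assignment $x^*$ with $x^*_{ij}=1/n$ for all $i,j$ maximises Nash welfare (by AM--GM applied to the concave function $\log$, or by symmetry of the Eisenberg--Gale program): every agent obtains utility exactly $W$. By \factref{fact:ceei-maxNash}, $x^*$ is CEEI-FRAC, and by \factref{fact:same-utility} \emph{every} CEEI-FRAC assignment (including any discrete one) must give every agent utility exactly $W$. Therefore a discrete CEEI-FRAC assignment exists iff there is a partition of the objects into $n$ bundles each of total weight $W$; since $W/4 < w(e_j) < W/2$, such a partition necessarily consists of $n$ triples, i.e.\ a yes-certificate for \textsc{3-Partition}. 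Conversely, any \textsc{3-Partition} solution gives $n$ bundles of weight $W$, and the corresponding discrete assignment yields the CEEI-FRAC utility profile and is therefore CEEI-FRAC.

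The reduction uses integers written in unary-bounded magnitude (as in \textsc{3-Partition}), so strong NP-hardness is inherited. The main subtlety is really only the justification that the unique CEEI-FRAC utility vector in the identical-utility case is $(W,\ldots,W)$; once that is in hand (via Facts~\ref{fact:ceei-maxNash} and~\ref{fact:same-utility} together with the symmetry of the Eisenberg--Gale objective), the correctness of the reduction is immediate. I don't foresee a serious obstacle beyond stating this observation carefully.
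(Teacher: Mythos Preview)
Your proposal is correct and follows essentially the same route as the paper: the paper's proof simply notes that a discrete assignment is CEEI-FRAC iff it attains the maximum fractional Nash welfare (via Facts~\ref{fact:ceei-maxNash} and~\ref{fact:same-utility}) and then invokes the identical \textsc{3-Partition} reduction from \thmref{th:ceei-disc}. You spell out more detail than the paper does---in particular the membership argument (which the paper omits) and the explicit identification of $(W,\ldots,W)$ as the unique CEEI-FRAC utility profile in the identical-utility instance---but the underlying idea is the same.
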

			\begin{proof}
				Checking whether exists a discrete assignment is CEEI-FRAC is equivalent to checking whether there exists one which achieves the maximum Nash welfare that can be achieved by a fractional assignment. The problem is strongly NP-hard via the same reduction as in proof of Theorem~\ref{th:ceei-disc}.				
				\end{proof}	
				
				
								The argument above also shows that for a restricted domain, if Nash welfare maximization is in P for discrete assignments,  then checking whether there exists a CEEI-FRAC discrete assignment is also in P.

				\begin{lemma}
					If a Nash welfare maximizing discrete assignment can be computed in polynomial time, it can be checked in polynomial time whether a CEEI-FRAC discrete assignment exists or not.
					\end{lemma}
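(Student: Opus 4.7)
The plan is to derive the existence of a discrete CEEI-FRAC assignment from two polynomial-time computations followed by a single equality check. By Facts~\ref{fact:ceei-maxNash} and~\ref{fact:same-utility}, an assignment is CEEI-FRAC if and only if it maximizes Nash welfare over $\mathcal{F}$, and every such assignment yields one and the same utility vector $u^{*}=(u^{*}_{1},\ldots,u^{*}_{n})$.

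First, I would compute $u^{*}$ in polynomial time by solving the Eisenberg--Gale convex program recalled at the beginning of this section; this gives the fractional CEEI-FRAC utilities explicitly. Next, using the hypothesized oracle, I would compute a discrete assignment $y$ that maximizes Nash welfare over $\mathcal{D}$. Finally, I would output ``yes'' precisely when $u(y)=u^{*}$, in which case $y$ itself witnesses the existence of a discrete CEEI-FRAC assignment.

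Soundness of the ``yes'' branch is immediate: any discrete assignment whose utility vector coincides with $u^{*}$ maximizes Nash welfare over all of $\mathcal{F}$, so by Facts~\ref{fact:ceei-maxNash} and~\ref{fact:same-utility} it inherits the CEEI-FRAC price vector from the fractional market equilibrium. For completeness, suppose some discrete CEEI-FRAC assignment $z$ existed. By the same facts $z$ maximizes Nash welfare over $\mathcal{F}$; since $\mathcal{D}\subseteq\mathcal{F}$, the fractional maximum upper-bounds the discrete one, and $z$ witnesses equality. Hence $y$ also attains the fractional Nash-welfare optimum, and by the strict concavity of $\sum_{i}\log u_{i}$ on the convex set of achievable utility vectors the maximizer $u^{*}$ is unique, giving $u(y)=u(z)=u^{*}$.

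The main step that needs a careful argument is this uniqueness claim---namely, that strict concavity of the Eisenberg--Gale objective forces every Nash-welfare maximizer, discrete or fractional, to realize the single vector $u^{*}$, so that a Nash-welfare-maximizing discrete assignment realizes $u^{*}$ whenever any discrete assignment does. The two computational steps are then direct invocations of the Eisenberg--Gale solver and the assumed oracle, and the final equality check is carried out componentwise on rational vectors in polynomial time.
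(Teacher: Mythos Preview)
Your proposal is correct and follows essentially the same approach as the paper: compute the fractional optimum via the Eisenberg--Gale program, compute a discrete Nash-welfare maximizer via the assumed oracle, and compare. The only cosmetic difference is that the paper compares the scalar Nash welfare values rather than the full utility vectors, but your uniqueness argument via strict concavity shows these two checks are equivalent, and your write-up supplies the soundness and completeness justifications that the paper leaves implicit.
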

					\begin{proof}
						In order to check whether a discrete CEEI-FRAC assignment exists or not, we can compute a Nash welfare maximizing discrete assignment and check whether the Nash welfare achieved is equal to the objective acheived in the Gale-Eisenberg program.
						\end{proof}
						
						\begin{corollary}
							If agents have utilities zero or one, then it can be checked in polynomial time whether a CEEI-FRAC discrete assignment exists or not. 
							\end{corollary}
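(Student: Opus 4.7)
The plan is to invoke the lemma just established: it suffices to give a polynomial-time algorithm for computing a Nash-welfare-maximizing discrete assignment in the $0/1$-utility case, since then we can compare its Nash welfare against the Eisenberg-Gale optimum to decide whether a CEEI-FRAC discrete assignment exists.

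Under $0/1$ utilities, the utility of agent $i$ from any discrete assignment $x$ is the integer count $|\{j : x_{ij}=1 \text{ and } u_{ij}=1\}|$, so Nash welfare is a product of non-negative integers bounded by $m^n$. First I would compute a maximum bipartite matching between agents and the objects they value. If no matching saturates $N$, then every discrete assignment gives at least one agent a utility of zero, hence Nash welfare zero; any assignment is then optimal, and we compare with the Eisenberg-Gale value. Otherwise, take a saturating matching as a seed, then route each remaining object to some agent who values it (feasible since by assumption every object has at least one interested agent).

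From this starting point I would improve by local search on augmenting paths: whenever there is an alternating chain of valued-object reassignments that moves a unit of utility from an agent with utility $u$ to an agent with utility $u' < u-1$, perform the swap. Each such swap strictly increases $\prod_i u_i(x_i)$, and since this product is a positive integer bounded by $m^n$ its logarithm is at most $n\log m$, bounding the number of improvement steps polynomially. Each augmenting-path search is a standard bipartite-graph computation.

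The main obstacle is proving that local optimality under such exchanges implies global optimality. The set of achievable integer utility profiles is exactly the integer points of the polymatroid associated with the transversal matroid of the agent-object valuation bipartite graph, and maximizing a strictly concave separable objective such as $\sum_i \log u_i$ over a polymatroid is a classical polymatroid optimization problem solvable by successive greedy augmentations. This is the step that requires the most care; an alternative route, cleaner if one is willing to cite it, is to directly invoke existing polynomial-time algorithms for Nash social welfare maximization under dichotomous preferences.
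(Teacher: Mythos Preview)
Your high-level plan matches the paper exactly: apply the preceding lemma, and appeal to a polynomial-time algorithm for maximizing Nash welfare over discrete assignments when utilities are $0/1$. The paper's proof is a one-line citation of Darmann and Schauer (2014) for precisely this fact, with no attempt to re-derive the algorithm.

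Your own algorithmic sketch, however, has a gap in the running-time argument. You claim that because $\prod_i u_i(x_i)\le m^n$ and each local swap strictly increases this integer product, the number of improvement steps is polynomially bounded since ``its logarithm is at most $n\log m$.'' But a strictly increasing integer sequence bounded by $m^n$ can have up to $m^n$ terms; bounding the logarithm of the final value does not bound the number of unit increments. To get a polynomial bound you would need each swap to increase $\sum_i \log u_i$ by at least some inverse-polynomial amount, and your swap (moving a unit from utility $u$ to $u'$ with $u'<u-1$) gives a multiplicative gain $\tfrac{(u-1)(u'+1)}{uu'}$ that can be $1+\Theta(1/m^2)$, which is not enough on its own without a sharper counting argument. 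You also correctly flag that local optimality $\Rightarrow$ global optimality is the delicate step; the polymatroid route you mention does work but needs more than a sentence. Your final fallback---simply citing the known polynomial-time result for dichotomous Nash welfare---is exactly what the paper does, and is the cleanest way to close the argument.
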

		\begin{proof}
			The statement follows from the fact that for 1-0 utilities, a Nash welfare maximizing discrete assignment can be computed in polynomial time~\citep{DaSc14a}.
			\end{proof}

				\subsubsection*{Acknowledgments}
				NICTA is funded by the Australian Government through the Department of Communications and the Australian Research Council through the ICT Centre of Excellence Program. The author thanks Sylvain Bouveret,  Michel Lema{\^\i}tre and Toby Walsh for feedback.


\end{document}